\DeclareMathOperator*{\argmax}{argmax}
\newtheorem{theorem}{Theorem}
\newtheorem{proposition}{Proposition}
\newtheorem{remark}{Remark}
\newtheorem{assumption}{Assumption}
\def\({\left(}
\def\){\right)}
\def\b0{{\mathbf{0}}}
\def\BibTeX{{\rm B\kern-.05em{\sc i\kern-.025em b}\kern-.08em
    T\kern-.1667em\lower.7ex\hbox{E}\kern-.125emX}}
\newcommand{\linebreakand}{%
  \end{@IEEEauthorhalign}
  \hfill\mbox{}\par
  \mbox{}\hfill\begin{@IEEEauthorhalign}
}
\begin{document}
\setlength{\textfloatsep}{0pt}
\setlength{\belowdisplayskip}{1pt}
\title{\LARGE Over-the-Air Consensus for Distributed Vehicle Platooning Control
}
\author{\IEEEauthorblockN{Jihoon Lee$^*$, Yonghoon Jang$^*$, Hansol Kim$^\ddagger$,  Seong-Lyun Kim$^*$, Seung-Woo Ko$^\S$}\\
\IEEEauthorblockA{$^*$School of EEE, {Yonsei University}, Seoul, Korea, email: \{jhlee, slkim\}@ramo.yonsei.ac.kr, matice1116@yonsei.ac.kr}
\IEEEauthorblockA{$^\ddagger$Dept. of Cont. and Inst. Eng., {Korea Maritime and Ocean University}, Busan, Korea, email: hansol@kmou.ac.kr}
\IEEEauthorblockA{$^\S$Dept. of Smart Mobility Eng., {Inha University}, Incheon, Korea, email: swko@inha.ac.kr}
}
\maketitle

\begin{abstract}
A distributed control of vehicle platooning is referred to as \emph{distributed consensus} (DC) since many \emph{autonomous vehicles} (AVs) reach a consensus to move as one body with the same velocity and inter-distance. For DC control to be stable, other AVs' real-time position information should be inputted to each AV's controller via \emph{vehicle-to-vehicle} (V2V) communications. On the other hand, too many V2V links should be simultaneously established and frequently retrained, causing frequent packet loss and longer communication latency. We propose a novel DC algorithm called \emph{over-the-air consensus} (AirCons), a joint communication-and-control design with two key features to overcome the above limitations. First, exploiting a wireless signal's superposition and broadcasting properties renders all AVs' signals to converge to a specific value proportional to participating AVs' average position without individual V2V channel information. Second, the estimated average position is used to control each AV's dynamics instead of each AV's individual position. Through analytic and numerical studies, the effectiveness of the proposed AirCons designed on the state-of-the-art New Radio architecture is verified by showing a $14.22\%$ control gain  compared to the benchmark without the average position.
\end{abstract}
\section{Introduction}
\emph{Vehicle platooning} (VP) is one cooperative driving technique,  where many \emph{autonomous vehicles} (AVs) move as one body with the same velocity while maintaining their inter-distances \cite{ACC_1}. VP has various potentials, such as fuel savings and traffic control (see, e.g., \cite{energy_efficient_2}). There have been extensive studies in academia and industry to make VP fully automated without a central controller. Such a distributed platoon control is an analogy to reaching a consensus among multiple agents, referred to as a \emph{distributed consensus} (DC)~\cite{distibuted_consensus}.
 
For DC to be effectively designed, one essential requirement is not to make the disturbances of one AV's position propagate along with the platoon, referred to as \emph{string stability} \cite{String_stability1}. To this end, it is required to obtain many AVs' kinetic information, including \emph{non-line-of-sight} (NLoS) AVs. It can be viable via the recent rise of  \emph{vehicle-to-vehicle} (V2V) communications by transmitting a modulated signal embedding each AV's own kinetic information to AVs located in NLoS \cite{NLoS_V2V}. On the other hand, practical factors of V2V channels cause unpredictable latency and packet loss. The main research thrust is thus to reach string stability under those considerations. The effect of communication latency is analyzed in \cite{impact_of_time_delay} that each AV's asynchronous control due to different latency makes VP string unstable. To maintain string stability, synchronizing their control timings is proposed under the assumption that the latency of each V2V link is equivalent. In \cite{VP_delay2}, the stochastic distribution of the resultant latency is derived based on modeling a packet as a Bernoulli trial, which is used to determine several control parameters to achieve string stability in the presence of packet loss. 

The above prior works do not consider the issue of training \emph{channel state information} (CSI), which is essential yet challenging in practical scenarios due to the following reasons. First, an AV's high velocity renders the relevant V2V CSIs' coherence times shorter, requiring frequent channel retraining before out-of-date. Second, each AV needs to share its information with multiple AVs nearby. In other words, a large number of V2V links should be established simultaneously using different pilots. A few pilots can be non-orthogonal, causing severe pilot contamination that brings about more frequent packet losses and longer communication latency \cite{pilot_contamination}. 

To address the above challenges, we design a consensus algorithm for DC control called \emph{over-the-air consensus} (AirCons), which achieves a consensus among multiple APs in the air without individual V2V CSI training. It is the integrated architecture of communication and control with two key features. First, each AV iteratively exchanges the signal embedding its position information, converging to a specific value proportional to the transmitting AVs' average position. Second, each AV utilizes the estimated average position to adjust its dynamics, which helps achieve string stability. AirCons is different from the over-the-air computation that requires individual CSIs to make all signals coherently combined (see, e.g., \cite{AFC_imperfect_CSI}). 
AirCons does not rely on AVs' individual position information. Instead, a superimposed signal is normalized by the sum of CSIs obtained via common pilot transmissions, thereby reducing the burden of training individual V2V CSIs. Besides, this operation enables the convergence to the targeted value closely, which is proved using random matrix theory. We design AirCons based on \emph{New Radio} (NR) architecture, verified via extensive simulation that AirCons can satisfy string stability with a significant gain on platooning error compared to the benchmark without the average information.           
\section{System Model}\label{Section:SystemModel}
This section introduces our system model, including network and signal models, and a distributed platoon control.  
\begin{figure}[t]
\centering
\centering
\includegraphics[width=8.8cm]{./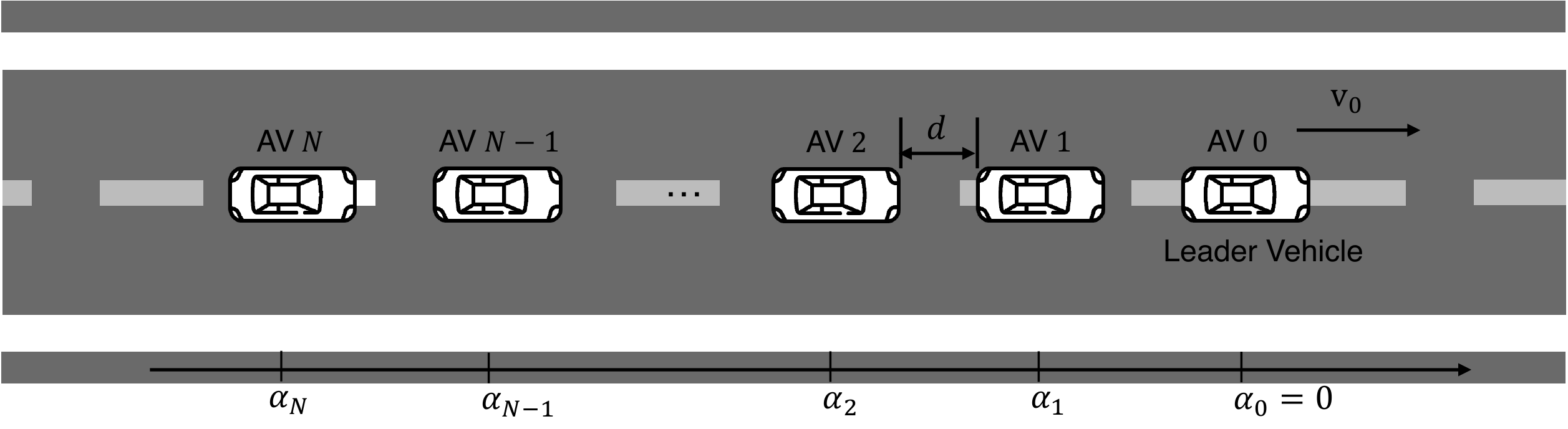}
\caption{A graphical illustration of platooning control with $(N+1)$ AVs on a straight highway.}\vspace{4pt}\label{Platoon}
\end{figure}
\subsection{Network Model}\label{Subsec:ServiceCaching}
We consider the scenario with $(N+1)$ AVs located on a uni-directional and single-lane highway, as illustrated in Fig. \ref{Platoon}. These AVs aim to form a platoon to move from left to right like one body. The rightmost AV, say AV $0$, is a leader determining the platoon's speed, denoted by ${v}_0$. The remaining AVs are followers, say AV $n\in \{1,...N\}$, attempting to keep their velocities and inter-distances between adjacent AVs at $v_0$ and a predetermined target distance $d$, respectively. To this end, each AV should share its location with the neighboring AVs, helping control their dynamics to stabilize the platoon. The detailed information sharing and platooning control mechanisms are introduced in the sequel. 

\subsection{Signal Model}
We consider that every AV is equipped with a full-duplex antenna, enabling simultaneous V2V transmission and reception among multiple AVs \cite{overtheair1001}.
\emph{Orthogonal Frequency-Division Multiplexing} (OFDM) is used as a transmit waveform, where a wide-band channel is divided into multiple narrow-band orthogonal sub-carriers. The number of sub-carriers is given as $F$. 
For sub-channel $f\in\{1,\cdots,F\}$ at time $t$, the channel coefficient from AVs $m$ to $\ell$ is denoted by $h_{\ell,m}^{(f)}(t)$, which is expressed as a product between path-loss and short term fading. Following the common assumption of Rayleigh fading, the channel $h_{\ell,m}^{(f)}(t)$ is modeled as a complex Gaussian random variable $\mathcal{CN}(0,d_{\ell,m}^{-\frac{\eta}{2}})$, where $d_{\ell,m}$ and $\eta$ are the inter-distance between AVs $\ell$ and $m$, and path-loss exponent, respectively. 

We denote $x_m^{(f)}(t)$ AV $m$'s baseband signal for sub-channel $f$ at time $t$, embedding its position information under its power constraint explained in the sequel. Assuming that AVs in $\mathbb{N}^{(f)}$ is granted to transmit their signals using sub-channel $f$, AV $\ell$'s received signal for sub-channel $f$ at time $t$, denoted by $y_{\ell}^{(f)}(t)$, is given as       
\begin{align}
    y_{\ell}^{(f)}(t)=\sum_{m \in \mathbb{N}^{(f)}\setminus\{\ell\}}h^{(f)}_{\ell,m}(t)x_m(t)+z_{\ell}^{(f)}[t],\label{Received_Signal}
\end{align}
where $z_{\ell}^{(f)}[t]$ represents an i.i.d. \emph{additive white Gaussian noise} (AWGN) following $\mathcal{CN}(0,\sigma_n^2)$. Note that AV $\ell$'s transmit signal is not included in \eqref{Received_Signal}, assumed to be perfectly canceled out using a self-interference cancellation technique.
\vspace{-5pt}
\subsection{Distributed Platooning Control}\label{Subsec:{Design of Controller for Platooning}}
\vspace{-5pt}
We explain a distributed platooning control mechanism based on the assumptions below. 
\begin{assumption}[Given State Information]\label{Assumption1} 
\emph{All AVs know the following two state information. 
\begin{itemize}
    \item \textbf{Own state information}: Each AV knows its own state information, say velocity $v_n$ and absolute location $\mathbf{p}_n$ for AV $n$, which are obtainable using various positioning techniques, e.g., global positioning system and cellular positioning system \cite{1003}. 
    \item \textbf{Leader's state information}: A dedicated channel is granted to AV $0$. Then, its velocity $v_0$ and $\mathbf{p}_0$ are periodically broadcast to the other AVs. 
\end{itemize}
} 
\end{assumption}

By Assumption \ref{Assumption1}, each AV can calculate the relative state information concerning the leader AV, given as
\begin{align}
    \alpha_n=\|\mathbf{p}_n-\mathbf{p}_0\|, \quad \beta_n=v_n-v_0,
\end{align}
where $\|\mathbf{x}\|$ represents the Euclidean distance of $\mathbf{x}$. Given $\{\alpha_n, \beta_n\}$, the aforementioned criteria for stabilizing the platoon are rewritten in terms of $\alpha_n$ and $\beta_n$ as follows:
\begin{align}
    & \alpha_{n} \xrightarrow{}n d,\quad \beta_{n} \xrightarrow{}0. \label{Eq: Goal of the Platoon}
\end{align}
Here, the desired inter-vehicle spacing $d$ is specified in Sec. \ref{Subsec:ServiceCaching}. To this end, we follow a second-order distributed platooning control, which is widely used in literature (see e.g.,  \cite{typical_controller1004} and \cite{typical_controller1005}). Specifically, we define a vector $\mathbf{c}_n=[c_{n,1}, c_{n,2}, \cdots, c_{n,N},]^T$, whose element $c_{n,m}=1$ if AV $m$'s state information obtained via V2V communication is involved to determine AV $n$'s control and $c_{n,m}=0$ otherwise. Then, AV $n$ can control its accelerator, denoted by $\mu_n$, according to the following equation:
\begin{align}
    \mu_n&=\frac{1}{\|\mathbf{c}_n\|}\sum_{m=1}^N {\kappa c_{n,m}(\alpha_{n}-\alpha_m-(n-m)d)}-\delta\beta_n+\xi_n,\label{Eq: Typical Controller}
\end{align}
where $\kappa$ and $\delta$ represent stiffness and damping coefficients, respectively, and $\xi_n$ represents the control term using information obtained via the third-party sensors (e.g., a predecessor AV's position and velocity obtained via RADAR). Note that  $\mathbf{c}_n$, $\kappa$, and $\delta$ are parameters predetermined by the concerned control policy and assumed to be given advance. 

We focus on the terms relevant to other AVs' relative distances, say $\{\alpha_m\}$, which should be delivered through wireless links. To this end,  \eqref{Eq: Typical Controller} is rewritten as
\begin{align}
    \mu_n&=\kappa (\alpha_{n}-nd)-\delta\beta_n+
    \frac{\kappa d}{|\mathbb{S}_n|}\sum_{m\in \mathbb{S}_n} {m}+\xi_n-\frac{\kappa}{|\mathbb{S}_n|}\sum_{m\in \mathbb{S}_n}{\alpha_m},\label{Eq:Revised_controller}
\end{align}
where $\mathbb{S}_n=\{m|c_{n,m}=1\}$ and $|\cdot|$ is a cardinality operator. Two key observations are made as follows. First, every term except the last one can be computed using the state information mentioned in Assumption \ref{Assumption1}. Second, the last term is determined by the average of relative distances involved in AV $n$'s control, denoted by $\gamma_n$ as
\begin{align}
\gamma_n=\frac{1}{|\mathbb{S}_n|}\sum_{m\in \mathbb{S}_n }{\alpha_m}.\label{Definition_Gamma}
\end{align}
In other words, the information required to control AV $n$'s accelerator is the sum of AVs' relative distance $\gamma_n$, not individual $\{\alpha_m\}_{m\in \mathbb{S}_n}$.   
    
\section{Over-the-Air Consensus: Principle and Design}\label{Section: AirCons}
This section introduces AirCons, a novel algorithm acquiring the average of AVs' relative distances in the air by exploiting a wireless channel's broadcasting and superposition properties. First, we explain the principle and design of AirCons based on its two key features contrasting from conventional V2V communications. 
\subsection{Overview and Key Features}\label{Subsection:Overview}
AirCons is designed for each AV to acquire the average of neighbor and its own AVs' relative distances by capturing a superimposed signal transmitted from the neighbors. To explain, AVs keep exchanging their received signals until all of them converge to $\zeta_n$, which is referred to as \emph{consensus}. In results, \emph{consensus} $\zeta_n$ is one-to-one mapped to average of neighbor $\gamma_n$.
After reaching a consensus, each AV's accelerator is updated according to \eqref{Eq:Revised_controller}. The detailed process to reach a consensus will be explained in Sec. \ref{subsec:Alg_Description}.

We explain how AirCons works from the perspective of an entire platooning network. For stabilizing the platoon, each AV is required to participate in multiple processes to reach different consensuses simultaneously, including not only its one (e.g., $\gamma_n$ for AV $n$) but also others ($\{\gamma_m\}$ when $n\in \mathbb{S}_m$). The total number of V2V links relevant to AirCons is thus $\sum_{n=1}^N |\mathbb{S}_n|$. 
It is a heavy burden to establish all individual V2V links by training their CSIs using different pilots. On the other hand, we highlight the following two features of AirCons, deviated from the conventional V2V architecture.  
\subsubsection{Independent Consensus Process} The basic unit to which the sub-carrier is allocated should be a consensus process, not an individual V2V link. In other words, sub-carriers allocated to one consensus process are shared by all involved AVs, while the other AVs are not allowed to use them. As a result, each consensus process can work independently without interfering with the others. 
\subsubsection{Superimposed Channel Estimation} Each AV needs to know superimposed channel coefficients. Consider a consensus group $\mathbb{S}_n$ using sub-carrier $f$. For a consensus group $\mathbb{S}_n$, we denote transmitter set $\mathbb{T}_n=\mathbb{S}_n \cup \{n\}$. For AV $\ell$, the received superimposed signal is $\sum_{m \in  \mathbb{T}_n /\{\ell\}}h^{(f)}_{\ell,m}(t)$. To estimate it, a single pilot waveform is commonly used for AVs in $\mathbb{T}_n$, denoted by $w_n$. Assume high \emph{signal-to-noise ratio} (SNR).
When AVs in $\mathbb{T}_{n}$ transmits $w_{n}$ simultaneously, 
AV $\ell$ can figure out the superimposed channel coefficients from the received signal $\left\{\sum_{m \in \mathbb{T}_{n}/\{\ell\}}h^{(f)}_{\ell,m}(t)\right\}w_{n}$, according to \eqref{Received_Signal}. The estimated one plays a normalization factor for the consensus, which will be explained in the following~subsection.

\subsection{Algorithm Description}\label{subsec:Alg_Description}
In this section, we elaborate on AirCons' algorithm step-by-step on how AVs in $\mathbb{S}_n$ achieve a consensus as the average of their relative distances.
\subsubsection{Resource Block Configuration and Assumption}\label{Sec: Resource_block}
\begin{figure}[t]
\centering
\includegraphics[width=8.8cm]{./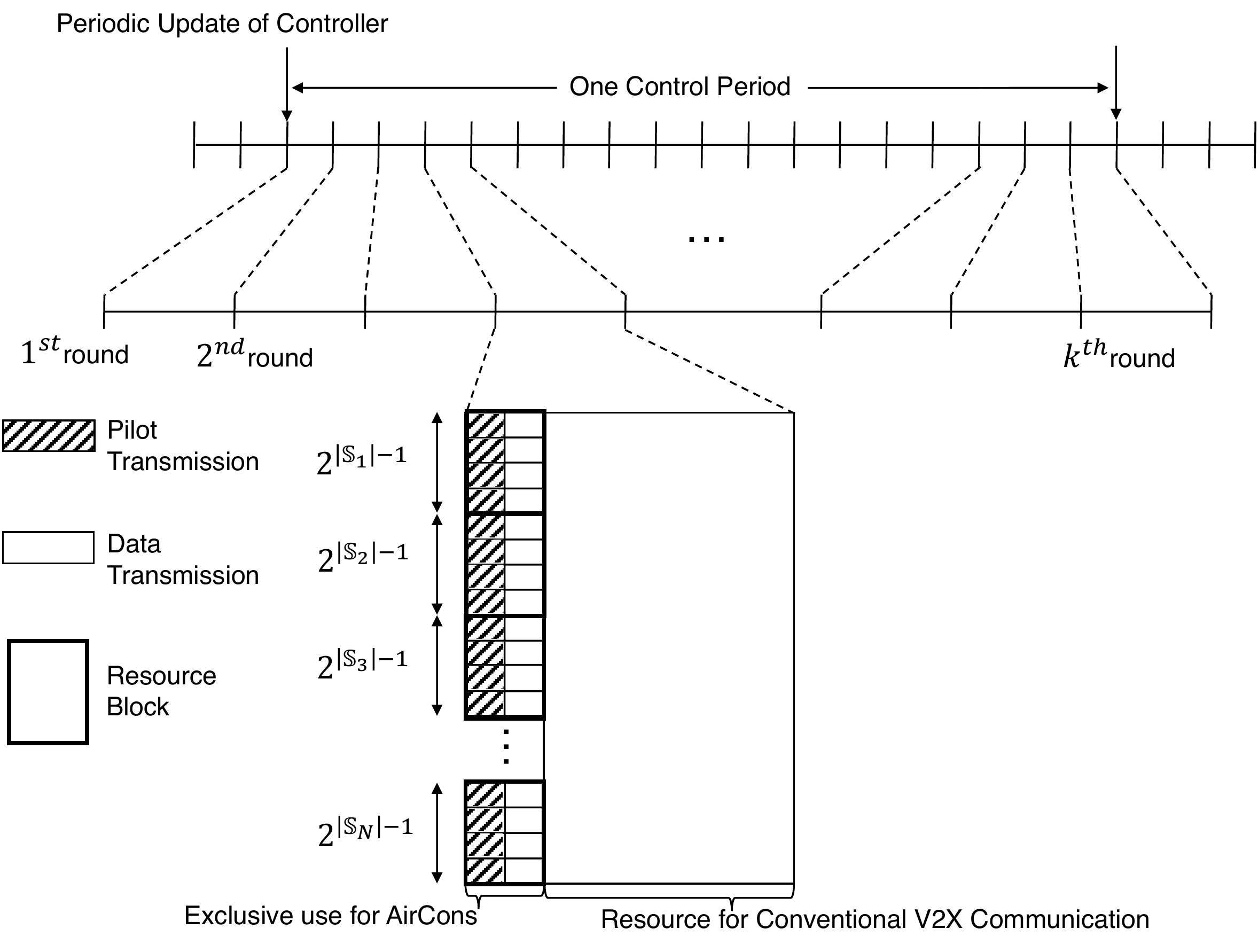}
\caption{A graphical illustration of resource block configuration in AirCons process with multiple rounds.}\vspace{4pt}\label{Algorithm_Timeline}
\end{figure}
A consensus is achieved through $K$ rounds of updating AVs' transmit signals. For one round update explained in the sequel, $2^{|\mathbb{S}_n|-1}$ contiguous sub-carriers and two adjacent OFDM symbols are needed, defined as a \emph{resource block} (RB), as illustrated in Fig. \ref{Algorithm_Timeline}. 
The set of the concerned sub-carriers are denoted by $\mathbb{F}_n=\{1,\cdots, 2^{|\mathbb{S}_n|-1}\}$. The assigned first and second OFDM symbols are used for pilot and data transmissions, respectively. 

Next, we assume that all channel coefficients in a RB are equivalent, namely,
\begin{align}\label{Channel_Assumption}
h_{\ell,m}^{(f_1)}(t_1)=h_{\ell,m}^{(f_2)}(t_2), 
\end{align}
for $f_1,f_2\in \mathbb{F}_m$ and 
$t_1,t_2\in[(k-1)T, (k-1)T+2\delta]$. 
Here, $\delta$ and $T$ are the durations of symbol and round respectively, and $k$ is the index of round.  The assumption will be well justified in the following remark.
\begin{remark}[Coherence Time and Bandwidth]\label{Coherence_Time_Bandwidth}\emph{It is recommended for NR V2V communications to use a short symbol duration, e.g., $\delta=16.7$ ($\mu$sec) when using the center frequency of $f_c=5.9$~(GHz) according to flexible numerology \cite{Numerology}. Denoting a size of consensus group $S=|\mathbb{S}_n|+1$, the resulting time and frequency ranges of one RB for a consensus process are $2\delta=33.4$ ($\mu$sec) and $\frac{2^{S-2}}{\delta}=60 \cdot {2^{S-2}}$ (kHz), which will be compared with the following analysis of coherence time and bandwidth to justify~\eqref{Channel_Assumption}.   
\begin{itemize}
\item \textbf{Coherence Time}: Given the relative velocity $v=200$ (km/h), the resulting coherence time $T_c$ is  $T_c=\frac{c}{f_c v}=915$ ($\mu$sec) with the light speed $c$, which is larger than one RB's time coverage~$2\delta$.  
\item \textbf{Coherence bandwidth}: NR specifies V2V channel's delay spread ranging from $5$ to $56$ (nsec) \cite{3gpp.37.885}. The corresponding coherence bandwidth $B_c$, inversely proportional to the delay spread, is at least $17.8$ (MHz). In other words, the flat fading assumption of \eqref{Channel_Assumption} makes sense when AVs for one consensus is less than $10$, i.e.,~$S\leq 10$. 
\end{itemize}
}
\end{remark}\label{RM: Coherence_time&bandwidth}
Hereafter, we focus on a typical consensus group, and 
all channel coefficients relevant to the typical consensus group at the $k$-th round is unified as $h_{\ell,m}[k]$, where the index of sub-carrier $f$ is omitted for ease of notation. 
Last, we utilize only an in-phase term for AirCons, defined as
\vspace{-5pt}
\begin{align}\label{a_k}
a_{\ell,m}[k]=\mathsf{Re}\{h_{\ell,m}[k]\},
\end{align}
whereas a quadrature term is reserved for future extension.  

\subsubsection{Pilot Encoding \& Transmission} 
Consider the OFDM symbol assigned for pilot transmissions, which will be used to find the sum of in-phase channels' magnitudes, given as
\begin{align}
     \sum_{m \in \mathbb{T}_{n}/\{\ell\}}|a_{\ell,m}[k]|=\sum_{m \in \mathbb{T}_{n}/\{\ell\}} \mathsf{b}\left(a_{\ell,m}[k]\right)\cdot a_{\ell,m}[k],
\end{align}
where $\mathsf{b}(x)=1$ if $x\geq0$ and $\mathsf{b}(x)=-1$ otherwise. Under the condition without knowledge of individual channel coefficients, AV $\ell$ attempts to check all possible binary combinations of $\{\mathsf{b}(a_{\ell,m}[k])\}_{m\in \mathbb{T}_{n}/\{\ell\}}$ and find the maximum one, namely,
\begin{align}\label{Optimization_1}\tag{P1}
\left\{\mathsf{b}(a_{\ell,m}[k])\right\}&_{m\in \mathbb{T}_{n}/\{\ell\}}=\{I_m^*[k]\}_{m\in\mathbb{T}_{n}/\{\ell\}}\\&=\argmax_{\{I_m\}\in\{-1,1\}^{|\mathbb{S}_{n}|}} \sum_{m \in \mathbb{T}_{n}/\{\ell\}} I_m \cdot a_{\ell,m}[k].\nonumber
\end{align}
Problem \ref{Optimization_1} is implementable by encoding the aforementioned pilot waveform $w_n$ differently for each combination. As an example of one sequence $I_m$, AV $m$ in $\mathbb{T}_n$ encodes $w_n$ by multiplying the corresponding indicator $I_m$. AV $\ell$'s received signal is given as $\left(\sum_{m \in \mathbb{T}_{n}/\{\ell\}}h_{\ell,m}[k] I_m \right)\sqrt{{P}}w_{n}+e_{\ell}[k]$, where ${P}$ is the transmit power budget and $e_{\ell}[k]$ represents the AWGN when receiving a pilot waveform. Without loss of generality, the waveform's energy $|w_{n}|$ is fixed to one. By multiplying the conjugate of $w_{n}$, say ${\bar{w}_{n}}$ where $\bar{x}$ denotes the conjugate of $x$, it is coherently demodulated as
\begin{align}
    r_{\ell}[k]=\sqrt{P} \left(\sum_{m \in \mathbb{T}_{n}/\{\ell\}}I_m \cdot h_{\ell,m}[k]\right) +\tilde{e}_{\ell}[k], \label{Eq: Waveform_Encoding}
\end{align}
where $\tilde{e}_{\ell}[k]={e_{\ell}[k]\bar{w}_{n}}$. When the signal strength is strong enough to ignore the noise term, the real part of \eqref{Eq: Waveform_Encoding} becomes proportional to the corresponding argument in \ref{Optimization_1}. 

Each binary combination is one-to-one mapped into a different sub-carrier in RB. Given $|\mathbb{S}_n|$ AVs, there exist $2^{|\mathbb{S}_n|}$ binary combinations, which can be reduced in half when considering reversal counterparts. As a result, the number of sub-carriers required for implementing \ref{Optimization_1} is  $\frac{2^{|\mathbb{S}_n|}}{2}=2^{|\mathbb{S}_n|-1}$, equivalent to the number of sub-carriers in one RB. We can find the optimal binary combination of \ref{Optimization_1} by choosing the largest received signal strength. The resulting demodulated signal of the $k$-th round are denoted by $r_{\ell}^*[k]$ as
\begin{align}
    r_{\ell}^*[k]=\sqrt{P} \left(\sum_{m \in \mathbb{T}_{n}/\{\ell\}}I_m^*[k] \cdot h_{\ell,m}[k]\right) +\tilde{e}_{\ell}[k].
\end{align}
\begin{figure}[t]
\centering
\includegraphics[width=8.8cm]{./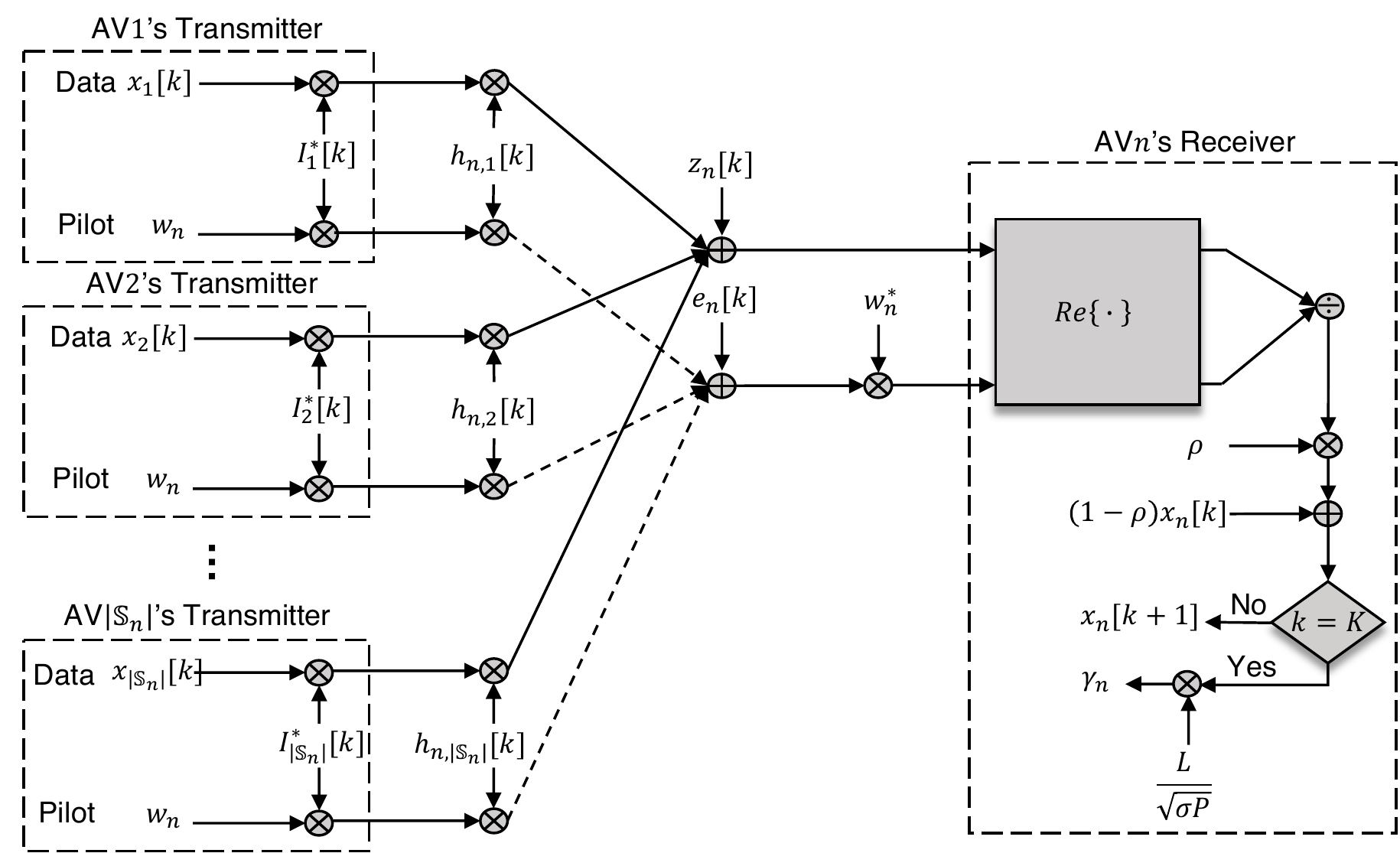}
\caption{A graphical example of AirCons in step showing how optimal $I_m^*$ is chosen and used for AirCons. }\vspace{4pt}\label{Schematic_Diagram}
\end{figure}
\subsubsection{Data Encoding} 
Each AV encodes its signal embedding its relative distance.
Consider AV $m$ in $\mathbb{T}_n$, whose relative distance is $\alpha_m$. We assume that $\alpha_m$ stays constant during one consensus process, while it may be changed over different processes. We adopt an \emph{amplitude modulation} (AM) to encode $\alpha_m$ into an initial base-band signal $s_{m}$, namely,  
\begin{align}
    s_{m}= \frac{\sqrt{\sigma P}\alpha_m}{L},
\end{align}
where $\sigma$ is a power scaling factor set as $1$ unless specified. Given $(N+1)$ AVs and $d$ desired inter-distance, the normalization factor $L$ is set as the desired maximum platoon length $L=Nd+\Delta$, where $\Delta$ is the control error margin. 

\subsubsection{Data Transmission \& Reception}
In the OFDM symbol duration assigned for the $(k+1)$-th round's data transmission, each AV in $\mathbb{T}_n$ transmits its signal by mixing the previous transmitted one with the term relevant to the received signals in the $k$-th round. 
Specifically, AV $m$ transmits its signal $I^*_{m}[k]\cdot x_{m}[k]$, where $x_{m}[k]$ is defined in a recursive form as 
\begin{align}\label{Eq:Updating_x}
    x_{m}[k+1]=
    \left\{
    \begin{aligned}
    &s_{m},  &\textrm{if $k=0$,}\\
    &(1-\rho) x_{m}[k]+ \rho \left(\frac{\mathsf{Re}\{y_{m}^{*}[k]\}}{\mathsf{Re}\left\{r_{m}^*[k]\right\}}\right),  &\textrm{if $k>0$.}\\
    \end{aligned}\right.
\end{align}
Here, $0<\rho<1$ is a weighted factor of the new observations that is the ratio of in-phases between the received data signal and demodulated pilot signals specified in \eqref{Eq: Waveform_Encoding}, given as
\begin{align}
  y_{m}^{*}[k]&=\sum_{\ell\in \mathbb{T}_{n}/\{m\}} h_{m,\ell}[k]I_\ell^*[k] x_{\ell}[k]+z_{m}[k],\nonumber\\
    r_{m}^{*}[k]&=\sum_{\ell\in \mathbb{T}_{n}/\{m\}} h_{m,\ell}  [k] I_\ell^*[k]\sqrt{P}+\tilde{e}_{m}[k].\nonumber
\end{align}

The schematic diagram including each step of AirCons is graphically illustrated in Fig. \ref{Schematic_Diagram}. In the following theorem, we will explain the asymptotic result of \eqref{Eq:Updating_x}, which is the main principle of AirCons.   

\begin{theorem}[Consensus]\label{Theorem1}\emph{Consider a high \emph{signal-to-noise ratio} (SNR) regime.  Given $0<\rho<1$, all AVs' transmitting signals, say $x_m[k]$ of \eqref{Eq:Updating_x} for $m\in\mathbb{T}_{n}$, always reach the consensus  to a value proportional to weighted average of the relative distances $\alpha_m$ as $k$ increases, namely,
\begin{align}
    \lim_{k\rightarrow \infty} x_m[k]=\frac{\sqrt{\rho P}}{L}\sum_{m\in \mathbb{T}_n}w_m \alpha_m,\quad \forall m\in\mathbb{T}_n,
\end{align}
where $\mathbf{\omega}=[\omega_1,\cdots, \omega_{S}]$ is an $S$-by-$1$ column vector  satisfying $\mathbf{1}^T\mathbf{\omega}=1$ and $\mathbf{\omega} \geq 0$.
}
\end{theorem}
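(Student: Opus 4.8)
The plan is to recast the recursion \eqref{Eq:Updating_x} as a product of random row-stochastic matrices and to invoke the ergodicity of such products. First I would fix the high-SNR regime, dropping $z_m[k]$ and $\tilde{e}_m[k]$, and exploit the optimality of $\{I_\ell^*[k]\}$. Since receiver $m$ reads off the sub-carrier whose binary combination solves Problem~\ref{Optimization_1}, on that sub-carrier $I_\ell^*[k]=\mathsf{b}(a_{m,\ell}[k])$, so that $\mathsf{Re}\{h_{m,\ell}[k]I_\ell^*[k]\}=|a_{m,\ell}[k]|\ge 0$ for every $\ell$. Because $x_\ell[k]$ is real (the initialization $s_m$ and the update are both real), this yields
\begin{align}
\frac{\mathsf{Re}\{y_m^*[k]\}}{\mathsf{Re}\{r_m^*[k]\}}=\sum_{\ell\in\mathbb{T}_n\setminus\{m\}}p_{m,\ell}[k]\,x_\ell[k],\quad p_{m,\ell}[k]=\frac{|a_{m,\ell}[k]|}{\sum_{\ell'\in\mathbb{T}_n\setminus\{m\}}|a_{m,\ell'}[k]|},\nonumber
\end{align}
i.e. a convex combination of the neighbors' current signals (up to the model's power normalization). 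Collecting $\mathbf{x}[k]=[x_1[k],\dots,x_S[k]]^T$, the update becomes $\mathbf{x}[k+1]=\mathbf{W}[k]\,\mathbf{x}[k]$, where $\mathbf{W}[k]$ has diagonal $1-\rho$ and off-diagonal entries $\rho\,p_{m,\ell}[k]$; hence each $\mathbf{W}[k]$ is row-stochastic, $\mathbf{W}[k]\mathbf{1}=\mathbf{1}$, with strictly positive diagonal.

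Next I would establish convergence to consensus. Writing $v_k=\min_m x_m[k]$ and $V_k=\max_m x_m[k]$, row-stochasticity makes every $x_m[k+1]$ a convex combination of $\{x_\ell[k]\}$, so $v_k$ is nondecreasing, $V_k$ is nonincreasing, and both are bounded; their limits therefore exist, and the only thing left to prove is that the spread $\Delta_k=V_k-v_k$ vanishes. For this I would use the Dobrushin ergodicity coefficient $\tau(\mathbf{W})$, for which $\Delta_{k+1}\le\tau(\mathbf{W}[k])\,\Delta_k$ and hence $\Delta_K\le\big(\prod_{k}\tau(\mathbf{W}[k])\big)\Delta_1$. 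Since $a_{m,\ell}[k]=\mathsf{Re}\{h_{m,\ell}[k]\}$ is a nondegenerate Gaussian, $|a_{m,\ell}[k]|>0$ almost surely, so every off-diagonal entry of $\mathbf{W}[k]$ is strictly positive; a positive stochastic matrix is scrambling, giving $\tau(\mathbf{W}[k])<1$ almost surely.

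The hard part is turning the almost-sure strict inequality $\tau(\mathbf{W}[k])<1$ into $\prod_k\tau(\mathbf{W}[k])\to 0$, because the coefficients are not bounded away from $1$ uniformly: a near-degenerate channel realization can push $\tau(\mathbf{W}[k])$ arbitrarily close to $1$. This is exactly where the independent-across-rounds fading model and the random-matrix viewpoint enter. Because the blocks $\{h_{m,\ell}[k]\}$ are independent across rounds with a common law, the variables $\log\tau(\mathbf{W}[k])$ are i.i.d.; each is strictly negative almost surely and bounded above by $0$, so its mean lies in $[-\infty,0)$. The strong law of large numbers then gives $\sum_{k}\log\tau(\mathbf{W}[k])\to-\infty$ almost surely, i.e. $\prod_k\tau(\mathbf{W}[k])\to 0$ almost surely, whence $\Delta_K\to 0$. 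Verifying that $\log\tau$ is well enough behaved for the law of large numbers to apply with a strictly negative mean (measurability, and that the mean is not merely $0$) is the one technical point I would check carefully.

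Finally I would identify the limit. Convergence of $v_k$ and $V_k$ to a common value shows $\mathbf{x}[K]\to c\,\mathbf{1}$ for a scalar $c$. Applying the same argument to the basis initializations $\mathbf{x}[1]=\mathbf{e}_j$ shows that each column of the product matrix $\mathbf{\Phi}_K=\mathbf{W}[K-1]\cdots\mathbf{W}[1]$ converges to $\omega_j\mathbf{1}$, so $\mathbf{\Phi}_K\to\mathbf{1}\boldsymbol{\omega}^T$; since $\mathbf{\Phi}_K$ is row-stochastic, $\boldsymbol{\omega}\ge 0$ and $\mathbf{1}^T\boldsymbol{\omega}=1$. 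Therefore $\lim_k\mathbf{x}[k]=\mathbf{1}\,(\boldsymbol{\omega}^T\mathbf{s})=\mathbf{1}\sum_{m\in\mathbb{T}_n}\omega_m s_m$, and substituting $s_m=\sqrt{\sigma P}\,\alpha_m/L$ yields the claimed weighted average of the relative distances $\alpha_m$, with the stated proportionality constant following from the power normalization.
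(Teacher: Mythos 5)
Your proof is correct, and its core reduction is the same as the paper's: in the high-SNR regime the optimality of $\{I_\ell^*[k]\}$ turns every effective channel weight into $|a_{m,\ell}[k]|\ge 0$, so the update $\mathbf{x}[k+1]=\mathbf{W}[k]\mathbf{x}[k]$ is driven by a row-stochastic matrix with diagonal $1-\rho$ and positive off-diagonal entries. Where you diverge is in what happens after that reduction. The paper stops there and cites \cite{Average_Consensus_Controller1006} for the convergence of the product of row-stochastic matrices to a rank-one limit $\mathbf{1}\boldsymbol{\omega}^T$; you instead prove it from scratch via the Dobrushin ergodicity coefficient, the scrambling property of strictly positive stochastic matrices, and a strong-law argument showing $\sum_k \log\tau(\mathbf{W}[k])\to-\infty$ almost surely. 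Your route is longer but buys something real: because the channel realizations vary over rounds, the contraction coefficients $\tau(\mathbf{W}[k])$ are \emph{not} uniformly bounded away from $1$, so a deterministic inhomogeneous-product argument does not apply directly, and your i.i.d.-across-rounds/SLLN step is exactly the device needed to close that gap (it does require assuming channel independence across rounds, which is consistent with, though not explicitly stated in, the paper's coherence-time model of Remark \ref{Coherence_Time_Bandwidth}). Your identification of the limit via basis initializations also makes explicit that $\boldsymbol{\omega}$ is a random, realization-dependent probability vector, which the theorem statement and the paper's appendix leave implicit. One cosmetic point you inherited gracefully: your derivation gives the constant $\sqrt{\sigma P}/L$ from $s_m=\sqrt{\sigma P}\alpha_m/L$, whereas the theorem (and the paper's own appendix) writes $\sqrt{\rho P}/L$; this mismatch is an inconsistency in the paper itself, not an error in your argument.
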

\begin{proof}
See Appendix \ref{Appendix_A}.
\end{proof}

\subsubsection{Data Decoding}
AV $n$'s transmitting signal after the consensus, say $x_n[K]$ specified in Theorem \ref{Theorem1}, which can be converted into the approximated relative distance sum by dividing $\frac{L}{\sqrt{\rho P}}$, namely, 
\begin{align}\label{Eq: Data_Decoding}
    \tilde{\zeta}_n= \frac{\sqrt{\sigma P}}{L}\left(\lim_{k\rightarrow \infty} x_n[k]\right)=\sum_{m\in\mathbb{T}_n}\omega_m \alpha_m.
\end{align}
Since $\tilde{\zeta}_n$ includes its own relative position $\alpha_n$, we calculate average of neighbors $\gamma_n$ by excluding $\alpha_n$. That is, $\gamma_n=\frac{\tilde{\zeta_n}(|\mathbb{S}_n|+1)-\alpha_n}{|\mathbb{S}_n|}$, and the result is inputted into the controller specified in \eqref{Eq:Revised_controller}, enabling AV $n$ to adjust its accelerator $\mu_n$ on time. The deviation from the ground truth $\zeta_n$ will be analyzed in the following subsection. 

\begin{figure}[t] 
\centering
\centering
\includegraphics[width=8.8cm]{./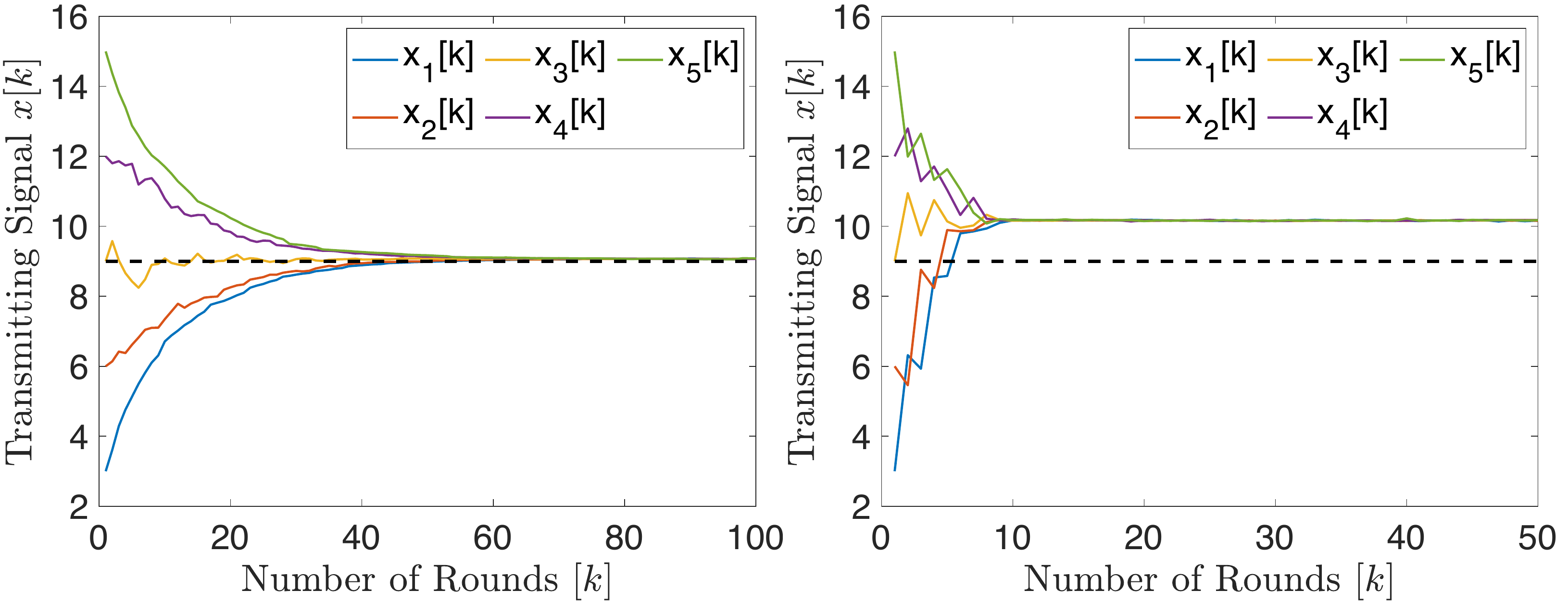}
\caption{The trajectory of consensus processes with $\rho=0.2$ (left) and $\rho=0.9$ (right). Dotted line represents ground truth average. The other parameters are specified in Sec. \ref{Section: Numerical_Results}. }\vspace{4pt}\label{Fig_AirCons}
\end{figure}

\begin{remark}[Effect of $\rho$]\emph{The weighted factor $\rho$ controls the trade-off between convergence speed and the accuracy of the consensus, as shown in Fig. \ref{Fig_AirCons}. A small $\rho$ updates the signal in a conservative manner by giving more portion to the current value. Eventually, the final consensus result is more close to the ground-truth linear average with slower convergence speed. As $\rho$ increases, on the other hand, the new observation is more involved in the signal, leading to a faster convergence with a certain error compared with the ground truth. Through extensive numerical studies, we set $\rho=0.9$, to guarantee a fast convergence with an acceptance accuracy.}
\end{remark}\label{Remark2}

\subsection{Deviation Analysis}
Denote $\epsilon_n$ the deviation between the consensus value and the ground truth, namely,
\begin{align}
    \epsilon_n=\tilde{\zeta}_n-{\zeta}_n,
\end{align}
which depends on the realizations of relevant channels, say $\{|a_{m,\ell}[k]|\}$ for all $k$. Following a similar approach in \cite{Average_Consensus_Controller1006} leads to expressing $\epsilon_n$ in closed~form as
\begin{align}
    \epsilon_n=\frac{\rho}{S(S-1)}\lim_{z\rightarrow 1} \left[\sum_{n,m \in \mathbb{T}_n}\left(\sum_{k=0}^\infty v_{n,m}[k]z^{-k}\right)\right],\nonumber
\end{align}
where $v_{n,m}[k]=\frac{\sum_{i \in \mathbb{T}_n/\{n\}}\left(|a_{n,m}[k]|-|a_{n,i}[k]|\right)}{\sum_{i \in \mathbb{T}_n/\{n\}}|a_{n,i}[k]|}x_{m}[k]$ if $n \neq m$ and $0$ otherwise. Its expectation over a random sequence of $\{v_{n,m}[k]\}$, say $\mathsf{E}[\epsilon_n]$, is given as
\begin{align}
    \mathsf{E}[\epsilon_n]
    =&\frac{\rho}{S(S-1)}\mathsf{E}\left[\lim_{z\rightarrow 1}\sum_{n,m \in \mathbb{T}_n} \left(\sum_{k=0}^{\infty}v_{n,m}[k])z^{-k}\right)\right]\nonumber\\=&\frac{\rho}{S(S-1)}\lim_{z\rightarrow 1}\sum_{n,m \in \mathbb{T}_n} \left(\sum_{k=0}^{\infty}\mathsf{E}[v_{n,m}[k])]z^{-k}\right).\nonumber
\end{align}
Here, the term $\mathsf{E}[v_{n,m}[k]]$ is lower bounded as
\begin{align}
   \mathsf{E}\left[v_{n,m}(k)\right]
   &=\sum_{i \in \mathbb{T}_n/\{n\}}\mathsf{E}\left[\frac{|a_{n,m}[k]|}{|a_{n,i}[k]|}-1\right]x_{m}[k]\nonumber
   \\&\overset{(a)}{\geq}\sum_{i \in \mathbb{T}_n/\{n\}}\left(\frac{\mathsf{E}\left[a_{n,m}[k]\right]}{\mathsf{E}\left[a_{n,i}[k]\right]}-1\right)x_{m}[k],\label{Expectation_v}
\end{align}
where (a) follows from Jensen's inequality. The equality condition of \eqref{Expectation_v} is that all channels are time-invariant and fixed as their expectations defined as ${\eta}_{\ell,m}=\mathsf{E}\left[|a_{\ell,m}[k]|\right]$, and $\tilde{\zeta}_n$ is reduced to $\boldsymbol{v}\boldsymbol{\alpha}$, where $\boldsymbol{\alpha}=[\alpha_1,\cdots,\alpha_{S}]^T$ and $\boldsymbol{v}$ is a left eigenvalue of the following matrix $\boldsymbol{U}\in\mathbb{R}^{S \times S}$:
\begin{align}
    \boldsymbol{U}=
  &\begin{pmatrix}
        1-\rho & \frac{\eta_{1,2}}{\sum_{m \in \mathbb{S}_{n,1}}\eta_{1,m} } &\dots &\frac{\eta_{1,S}}{{\sum_{m \in \mathbb{S}_{n,1}}\eta_{1,m} }}\\ \frac{\eta_{2,1}}{\sum_{m \in \mathbb{S}_{n,2}}\eta_{2,m} }& 1-\rho & \dots & \frac{\eta_{2,S}}{\sum_{m \in \mathbb{S}_{n,2}}\eta_{2,m}} \\ \vdots & \vdots &\ddots &\vdots\\\frac{\eta_{S,1}}{\sum_{m \in \mathbb{S}_{n,S}}\eta_{S,m} }& \frac{\eta_{S,2}}{\sum_{m \in \mathbb{S}_{n,S}}\eta_{S,m}}&\dots &1-\rho
  \end{pmatrix},\nonumber
\end{align}
In other words, $\mathsf{E}[\epsilon_n]$ is lower bounded as
\begin{align}
\mathsf{E}\left[\epsilon_n\right]\geq \boldsymbol{v}\boldsymbol{\alpha}-\zeta_n.\label{Eq:Epsilon_lower_bound}
\end{align}
In the following proposition, we confirm that the above lower bound can be simplified when the inter-distance requirement in \eqref{Eq: Goal of the Platoon} is satisfied.     
\begin{proposition}[Consensus Deviation]\label{Proposition1}\emph{Assume that the inter-distance between adjacent AVs is equivalent, e.g., $d_{m-1,m}=d_{m,m+1}$ for all $m\in \mathbb{T}_n$. Then, the expectation of the consensus deviation $\epsilon_n$ becomes non-negative, e.g., $\mathsf{E}[\epsilon_n]\geq 0$.}
\end{proposition}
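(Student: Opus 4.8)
The plan is to show that, under the equal-spacing hypothesis, the lower bound in \eqref{Eq:Epsilon_lower_bound} is in fact zero, so that $\mathsf{E}[\epsilon_n]\geq\boldsymbol{v}\boldsymbol{\alpha}-\zeta_n=0$ follows at once. Everything hinges on the observation that $\boldsymbol{U}$ is row-stochastic: each row places weight $1-\rho$ on the diagonal and spreads the remaining weight $\rho$ over the off-diagonal entries in proportion to the normalized channel magnitudes, so $\boldsymbol{U}\mathbf{1}=\mathbf{1}$. The left eigenvector $\boldsymbol{v}$, normalized so that $\boldsymbol{v}\mathbf{1}=1$, is therefore the stationary distribution of the Markov chain with transition matrix $\boldsymbol{U}$, and $\boldsymbol{v}\boldsymbol{\alpha}$ is the weighted average it induces on the relative distances.

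First I would pin down this stationary distribution. Since $a_{\ell,m}[k]=\mathsf{Re}\{h_{\ell,m}[k]\}$ with $h_{\ell,m}\sim\mathcal{CN}(0,d_{\ell,m}^{-\eta/2})$, the expected magnitude $\eta_{\ell,m}=\mathsf{E}[|a_{\ell,m}[k]|]$ depends only on the physical distance $d_{\ell,m}$ and is symmetric, $\eta_{\ell,m}=\eta_{m,\ell}$. The chain is thus reversible, and a one-line detailed-balance check gives $v_i\propto W_i$, where $W_i=\sum_{m\in\mathbb{T}_n\setminus\{i\}}\eta_{i,m}$ is the total connection weight of AV $i$ (the factor $\rho$ cancels). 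The stationary weight of each vehicle is therefore proportional to the aggregate strength of its channels to the rest of the group.

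Next I would invoke equal spacing, $d_{a,b}=|a-b|\,d$ for AVs in $\mathbb{T}_n$, which reduces $\eta_{i,j}$ to $g(|i-j|)$ for a single decreasing function $g$. Then $W_i$ depends only on the position of $i$ within the consecutive consensus group and is symmetric about the center, $W_i=W_{S+1-i}$, so $v_i=v_{S+1-i}$. The same hypothesis forces the relative distances into an arithmetic progression $\alpha_i=\alpha_1+(i-1)d$, so that the true uniform average is $\zeta_n=\frac{1}{S}\mathbf{1}^T\boldsymbol{\alpha}=\tfrac{1}{2}(\alpha_1+\alpha_S)$. Pairing index $i$ with $S{+}1{-}i$ and using $v_i=v_{S+1-i}$ together with $\alpha_i+\alpha_{S+1-i}=\alpha_1+\alpha_S$ collapses the weighted average to $\boldsymbol{v}\boldsymbol{\alpha}=\tfrac{1}{2}(\alpha_1+\alpha_S)=\zeta_n$; substituting into \eqref{Eq:Epsilon_lower_bound} then yields $\mathsf{E}[\epsilon_n]\geq 0$.

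The main obstacle is the second step: identifying the stationary distribution as $v_i\propto W_i$ via reversibility and, more delicately, establishing the symmetry $v_i=v_{S+1-i}$. This symmetry is precisely what balances the over- and under-weighting of the outer vehicles, and it requires the consensus group $\mathbb{T}_n$ to be symmetric about its midpoint---i.e., a consecutive block of equally spaced AVs, the natural platooning configuration---which I would state explicitly as part of the hypothesis. Once the symmetry is in hand, the concluding pairing computation is routine.
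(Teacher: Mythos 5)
Your proof is correct, and it reaches the key fact $\boldsymbol{v}\boldsymbol{\alpha}=\zeta_n$ by a genuinely different route than the paper. The paper argues structurally: under equal spacing, $\eta_{m,\ell}$ depends only on $|m-\ell|$, so $\boldsymbol{U}$ is centro-symmetric ($u_{m,\ell}=u_{S+1-m,S+1-\ell}$), from which it asserts that the left Perron eigenvector satisfies $v_i=v_{S+1-i}$, and then closes with the same pairing computation you use ($\alpha_i+\alpha_{S+1-i}$ constant, so $\boldsymbol{v}\boldsymbol{\alpha}=\tfrac{1}{2}(S+1)d=\zeta_n$). You instead identify $\boldsymbol{v}$ in closed form: interpreting the row-stochastic $\boldsymbol{U}$ as a lazy random walk on a weighted graph, the symmetry $\eta_{i,j}=\eta_{j,i}$ gives detailed balance with $v_i\propto W_i=\sum_{m\neq i}\eta_{i,m}$, and equal spacing makes $W_i=W_{S+1-i}$. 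Your approach buys more: it yields the actual consensus weights rather than just their symmetry, and it makes rigorous what the paper dismisses as ``obvious'' (the paper's symmetry claim tacitly needs uniqueness of the eigenvalue-$1$ left eigenspace, i.e., irreducibility, to conclude $v_i=v_{S+1-i}$ from centro-symmetry; your detailed-balance computation sidesteps that). The paper's centro-symmetry argument is in turn slightly more general in principle, since it would survive even if $\eta$ were not symmetric (so reversibility failed) as long as flip-invariance held. You are also right to flag, as an explicit hypothesis, that $\mathbb{T}_n$ must be a consecutive, equally spaced block indexed symmetrically about its midpoint; the paper uses exactly this (positions $d,2d,\dots,Sd$) without stating it.
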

\begin{proof}
See Appendix \ref{Appendix_B}.
\end{proof}
Through extensive simulations, the tightness of result in Proposition \ref{Proposition1} is verified enough to ignore the gap between the two. Besides, the result works well even when there exists a certain level of control error on inter-distance. As a result, AirCons is verified to reach the consensus to the accurate average of relative distances under the condition of equivalent inter-distance with an acceptance relaxation.   
\section{Numerical Results}\label{Section: Numerical_Results}

\begin{figure*}[t] 
\centering
\centering
\includegraphics[width=18cm]{./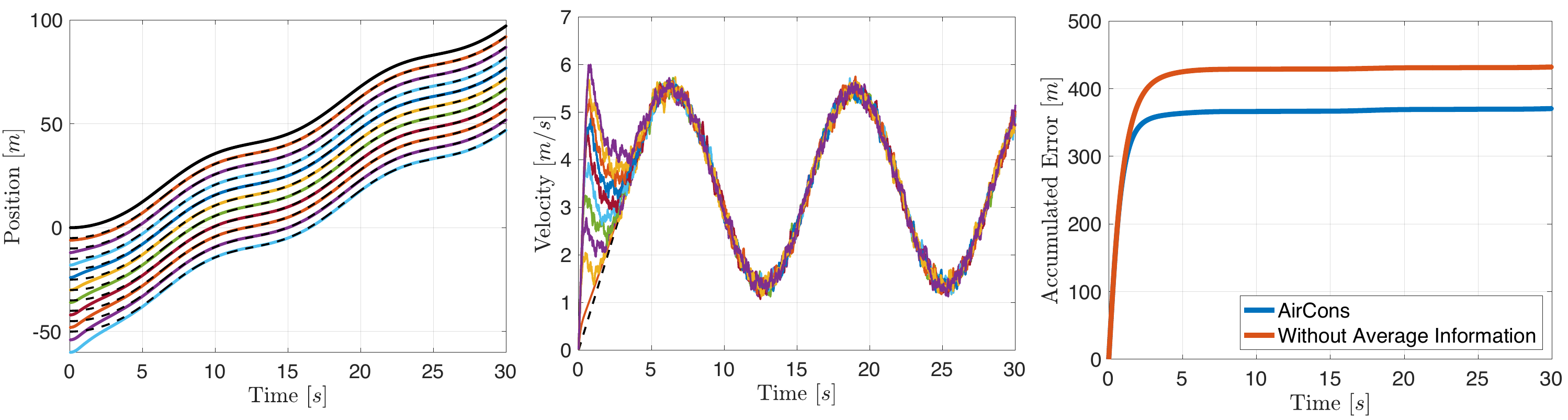}
\caption{Platooning control and the resultant performance of AirCons over time. (Left) Position trajectory of each AV in a platoon. Colored and dotted lines show each AV's actual and desired position, respectively. (Center) Each AV's velocity over time. (Right) The comparison of accumulated position error between AirCons and the benchmark without average information.}
\vspace{4pt}\label{Fig_platoon_position}
\end{figure*}

This section provides simulation results to verify the effectiveness of AirCons in terms of stability and control performance. As a benchmark, we consider   
a leader-predecessor following control mechanism where a leader and predecessor AVs' positions are used to control each AV's accelerator. Simulation parameters are set as follows unless specified otherwise. 
The number of AVs in a platoon $N$ is $10$.
Each AV's maximum transmission power $P$ is $23$ (dBm).
The total bandwidth $B$ is $20$ (MHz). A noise spectral density is $-174$ (dBm/Hz). The path-loss exponent $\alpha$ is set as $4$. The target inter-vehicle distance is  $d=5$ (m). 

To check string stability, we model a leader vehicle's turbulence as follows. At the initial stage, the leader vehicle increases its speed with a constant acceleration of $1$ (m/$\mathrm{s}^2$). After $5$ seconds, the leader vehicle repeatedly accelerates and decelerates by setting its accelerator as $10\sin{\frac{t}{2}}$ (m/$\mathrm{s}^2$). As recalled in Assumption \ref{Assumption1}, the leader AV's position information is periodically broadcast. We set the broadcasting interval as $10$ (ms). Besides, each AV can scan the predecessor AV's inter-distance using RADAR sensors with the interval of $10$ (ms). The number of iterations for a consensus is $K=6$, and the resultant delay of the average position estimate is $6\times 0.915=5.49$ (ms), where $915$ ($\mu$s) is the coherence time specified in Remark \ref{Coherence_Time_Bandwidth}.

Fig. \ref{Fig_platoon_position} represents the performance of AirCons with several interesting observations. First, the left-side figure shows each AV's trajectory (solid curves) with $10$ different colors representing each of them. It is shown that every AV follows the desired path (dotted curves) except the initial phase. Each AV's velocity over time is plotted in the middle figure. It is shown that each AV's velocity follows the leader AV's one once the platoon becomes stabilized, thereby satisfying string stability under the current turbulence setting. The right-side figure represents the accumulated errors of AirCons and benchmarks, both of which meet string stability. On the other hand, the average position information helps reach a consensus faster than the benchmark, resulting in  $14.22\%$ reduction of the accumulated error.

\section{Conclusion}\label{Section: Conclusion}
This work has proposed AirCons, a joint communication-and-control design for distributed VP. The main goal of AirCons is to estimate the average position of neighbor AVs instead of individual position information, which is verified via extensive simulations to provide a significant control gain. Exploiting a wireless signal's superposition and broadcasting properties enable multiple AVs' received signals to reach the consensus in the air, which is close to the desired average position. Besides, AirCons do not require establishing individual V2V links, reducing the significant burden of frequent training of many channels. To make the proposed AirCons more practical, we consider several interesting directions for future work, e.g., the extensions to radio resource management and complex driving scenarios. 
 
\bibliographystyle{IEEEtran}
\bibliography{ref}

\appendix
\subsection{Proof of Theorem \ref{Theorem1}}\label{Appendix_A} With noise being neglected and 
$k>1$, \eqref{Eq:Updating_x} is rewritten as 
\begin{align}\label{Eq: revised_updating_x}
    x_m[k+1]=&(1-\rho) x_{m}[k] \nonumber\\&+\rho \left(\frac{\mathsf{Re}\{\sum_{\ell\in \mathbb{T}_{n}/\{m\}} h_{m,\ell}[k]I_{\ell}^*[k] x_{\ell}[k]\}}{\mathsf{Re}\left\{\sum_{\ell\in \mathbb{T}_{n}/\{m\}} h_{m,\ell}  [k] I_\ell^*[k]\right\}}\right)\nonumber\\
    \overset{(a)}{=}&(1-\rho) x_{m}[k]+\rho \left(\frac{\sum_{\ell\in \mathbb{T}_n/\{m\}} |a_{m,\ell}[k]| x_{\ell}[k]}{\sum_{\ell\in \mathbb{T}_n/\{m\}} |a_{m,\ell}[k]| }\right),
\end{align}
where $a_{m,\ell}[k]=\mathsf{Re}\{h_{m,\ell}[k]\}$ are specified in \eqref{a_k}, and (a) follows from the fact that $I_{\ell}^*[k]$ and $x_\ell[k]$ are pure real term. Denote $\mathbf{x}[k]=\left[x_1[k],x_2[k],...,x_{S}[k]\right]^T$. Then, \eqref{Eq: revised_updating_x} can be expressed as $\mathbf{x}_m[k+1]=\mathbf{b}_m[k]^T \mathbf{x}[k]$.
\begin{align}
    \mathbf{x}[k+1]=\mathbf{B}[k]\mathbf{x}[k],
\end{align}
where $\mathbf{B}[k]=\left[\mathbf{b}_1[k],\cdots, \mathbf{b}_{S}[k]\right]^T$. 
Given $\rho\geq 0$, all elements of $\mathbf{b}_m[k]$ are strictly positive and their sum is always one, confirming that $\mathbf{B}[k]$ is a row-stochastic matrix. It is proved in \cite{Average_Consensus_Controller1006} that $\mathbf{x}[k]$ converges to the weighted  average of the initial value $\mathbf{x}[0]$, namely, 
\begin{align}
\mathbf{w}^T\mathbf{x}[1]=\sum_{m\in\mathbb{S}_n\cup{\{n\}}}w_m s_m=\frac{\sqrt{\rho P}}{L}\sum_{m\in\mathbb{T}_n}w_m \alpha_m, 
\end{align}
which completes the proof.

\subsection{Proof of Proposition \ref{Proposition1}} \label{Appendix_B}
Channels are  time-invariant. That is $\eta_{m,\ell}=\sqrt{\frac{2}{\pi}}\frac{1}{|m-\ell|}$. Then matrix $\boldsymbol{U}$ is written as follows:
\begin{align}\label{Eq: Centro-symmetric}
    &\boldsymbol{U}=\nonumber\\
  &\begin{pmatrix}
        1-\rho & \frac{\frac{1}{d|1-2|}}{{\sum_{m \in \mathbb{T}_{n}/\{1\}}\frac{1}{d|1-m|}} } &\dots &\frac{\frac{1}{d|1-S|}}{\sum_{m \in \mathbb{T}_{n}/\{1\}}\frac{1}{d|1-m|}}\\\frac{\frac{1}{d|2-1|}}{{\sum_{m \in \mathbb{T}_{n}/\{2\}}\frac{1}{d|2-m|}} }& 1-\rho & \dots & \frac{\frac{1}{d|2-S|}}{{\sum_{m \in \mathbb{T}_{n}/\{2\}}\frac{1}{d|2-m|}} }\\ \vdots & \vdots &\ddots &\vdots\\ \frac{\frac{1}{d|S-1|}}{{\sum_{m \in \mathbb{T}_{n}/\{S\}}\frac{1}{d|S-m|}}}& \frac{\frac{1}{d|S-2|}}{{\sum_{m \in \mathbb{T}_{n}/\{S\}}\frac{1}{d|S-m|}}} &\dots &1-\rho
  \end{pmatrix}.
\end{align}
Since, $ \sum_{m \in \mathbb{T}_{n}/\{\ell\}}\frac{1}{d|\ell-m|}=\sum_{m \in \mathbb{T}_{n}/\{S+1-\ell\}}\frac{1}{d|S+1-\ell-m|}$, matrix element  $u_{m,\ell}=u_{(S+1-m,S+1-\ell)}$. Then matrix $\boldsymbol{U}$ is centro-symmetric matrix. Left eigen vector corresponding to eigen value $1$ is denoted as $\boldsymbol{v}$. Then,
\begin{align}\label{Eq: left_eigen}
    \boldsymbol{vU}=1\cdot \boldsymbol{v}.
\end{align}
Then, $\boldsymbol{v}(\boldsymbol{U}-\boldsymbol{I})=\boldsymbol{0}$, where $\boldsymbol{I}$ denotes identity matrix and $\boldsymbol{0}$ denotes matrix of size ($1\times S$) whose elements are zeros. 
It is obvious that left eigen vector $\boldsymbol{v}$ is symmetric. Since inter-distance of AVs are assumed to be strictly equal, we say $[\alpha_1,\cdots,\alpha_{S}]=[d ,2d,...Sd]$. Then, $\boldsymbol{v\alpha}=\frac{1}{2}\left(S+1\right)d$
Then, we conclude that $\boldsymbol{v\alpha}=\zeta_n$. Thus proof is completed.
\end{document}